\def\bR{\mathbb{R}}
\def\b1{\mathbb{1}}
\def\eE{\mathsf{E}}
\def\eP{\mathsf{P}}
\def\tT{\mathtt{T}}
\def\cB{\mathcal{B}}
\def\cC{\mathcal{C}}
\def\cP{{\mathcal{P}}}
\def\cU{{\mathcal{U}}}
\def\sF{{\mathscr{F}}}
\def\e1{\mathsf{1}}
\def\ofs{(s)}
\def\oft{(t)}
\def\ofv{(v)}
\def\ofw{(w)}
\def\ofx{(x)}
\def\ofy{(y)}
\def\of0{(0)}
\def\d{\partial}
\def\tr{\mathrm{tr}}
\def\id{\mathrm{I}}
\def\bf0{\mathbf{0}}
\def\cp1{\mathbb{CP}^1}
\def\matn{\mathrm{Mat}_n(\bR)}
\def\arg{\mathrm{arg}}
\newtheorem{theorem}{Theorem}[section]
\newtheorem{proposition}[theorem]{Proposition}
\newenvironment{proof}[1][Proof]{\noindent\textbf{#1:} }{\ \rule{0.5em}{0.5em}}
\begin{document}

\title{\textbf{Semiclassical approximation in stochastic optimal control\\
I. Portfolio construction problem}}
\author{\textbf{Sakda Chaiworawitkul}\\
JPMorgan Chase\\
New York, NY 10179\\
USA
\and \textbf{Patrick S. Hagan}\\
Mathematics Institute\\
24-29 St Giles\\
Oxford University\\
Oxford, OX1 3LB\\
UK
\and \textbf{Andrew Lesniewski}\\
Department of Mathematics\\
Baruch College\\
One Bernard Baruch Way\\
New York, NY 10010\\
USA}
\date{First draft: December 3, 2013\\
This draft: \today}
\maketitle

\begin{abstract}
This is the first in a series of papers in which we study an efficient approximation scheme for solving the Hamilton-Jacobi-Bellman equation for multi-dimensional problems in stochastic control theory. The method is a combination of a WKB style asymptotic expansion of the value function, which reduces the second order HJB partial differential equation to a hierarchy of first order PDEs, followed by a numerical algorithm to solve the first few of the resulting first order PDEs. This method is applicable to stochastic systems with a relatively large number of degrees of freedom, and does not seem to suffer from the curse of dimensionality. Computer code implementation of the method using modest computational resources runs essentially in real time. We apply the method to solve a general portfolio construction problem.
\end{abstract}

\tableofcontents

\section{\label{sec:Introduction}Introduction}

The stochastic Hamilton-Jacobi-Bellman (HJB) partial differential equation is the cornerstone of stochastic optimal control theory (\cite{FS92}, \cite{YZ99}, \cite{P09}). Its solution, the value function, contains the information needed to determine the optimal policy governing the underlying dynamic optimization problem. Analytic closed form solutions to the HJB equation are notoriously difficult to obtain, and they are limited to problems where the underlying state dynamics has a simple form. Typically, these solutions are only available for systems with one degree of freedom.

A variety of numerical approaches to stochastic optimal control have been studied. An approach based on the Markov chain approximation is developed in \cite{KD01}. This approach avoids referring to the HJB equation altogether, and is, instead, based on a suitable discretization of the underlying stochastic process. Other recent approaches, such as \cite{FLO7}, \cite{KLP13}, and \cite{AK14}, rely on ingenious discretization schemes of the HJB equation. These numerical methods are generally limited to systems with low numbers of degrees of freedom, as they are susceptible to the ``curse of dimensionality''.

In this paper, we present a methodology for effectively solving a class of stochastic HJB equations for systems with $n$ degrees of freedom, where $n$ is a moderately large number ($\lessapprox 200$). The solution methodology is based on an analytic approximation to the full HJB equation which reduces it to an infinite hierarchy of first order partial differential equations. This is accomplished by means of an asymptotic expansion analogous to the Wentzel-Kramers-Brillouin (WKB) method used in quantum mechanics, optics, quantitative finance, and other fields of applied science, see e.g. \cite{BO99}, \cite{KC85}. The first in the hierarchy of equations is the classical Hamilton-Jacobi (HJ) equation which is analogous to the equation describing the motion of a particle on a Riemannian manifold\footnote{The language of Riemannian geometry provides a natural, albeit somewhat technical, framework for WKB expansion of the HJB equation, and we intend to discuss it in a separate paper.} subject to external forces. Its structure is somewhat less complicated than that of the full HJB equation, and its properties have been well understood. The solution to this equation is in essence the most likely trajectory for the optimal control of the stochastic system. Similar ideas, within a completely different setup have been pursued in \cite{T11} and \cite{HDM14}. The remaining equations are linear first order PDEs, with progressively more complex structure of coefficient functions.

The approximate character of the solution of the HJB equation that we discuss is twofold. Firstly, we solve the Hamilton-Jacobi equation and the first of the linear PDEs in the hierarchy only. The WKB expansion is asymptotic, and the expectation is that these two equations capture the nature of the actual solution close enough. The remaining members of the hierarchy are neglected as they are believed that they contain information which does not significantly affect the shape of the solution. We refer to this approximation as the semiclassical (or eikonal) approximation in analogy with a similar approximation in physics. Interestingly, there is a class of non-trivial stochastic optimal control problems for which the semiclassical approximation produces the actual exact solutions. Two examples of such problems are discussed in the paper.

Secondly, the solutions to the two leading order PDEs are constructed through numerical approximations. The key element of the numerical algorithm is a suitable symplectic method of numerical integration of Hamilton's canonical equations, which are the characteristic equations of the HJ equation. Here, we use the powerful St\"ormer-Verlet (or leapfrog) method \cite{HLW03}, \cite{LR04} to construct numerically the characteristics. Furthermore, we use a Newton-type search method in order to construct the numerical solution to the HJ equation out of the characteristics. This method uses a system of variational equations associated with Hamilton's equations.

This work has been motivated by our study of a stochastic extension of the continuous time version of the Markowitz mean variance portfolio optimization. The methodology developed here should, however, provide a practical method for implementation of the resulting portfolio construction. We believe, however, that the method is of broader interest and can be applied to a class of stochastic optimization problems outside of portfolio construction theory.

\section{\label{sec:hjbEq}Portfolio construction problem and the HJB equation}

We assume that the underlying source of stochasticity is a standard $p$-dimensional Wiener process $Z\oft\in\bR^p$ with independent components,
\begin{equation}
\eE[dZ\oft dZ\oft^\tT]=\id dt.
\end{equation}
Here, $\id$ denotes the $p\times p$ identity matrix. We let $(\Omega,(\sF)_{t\geq 0},\eP)$ denote the filtered probability space, which is associated with the Wiener process $Z$.

We formulate the portfolio construction problem as the following stochastic control problem. We consider a controlled stochastic dynamical system whose states are described by a multi-dimensional diffusion process $(X\oft,W\oft)$, which takes values in $\cU\times\bR$, where $\cU\subset\bR^n$  is an open set. The components $X^i$, $i=1,\ldots,n$, of $X$ represent the prices of the individual assets in the portfolio, and $W$ is total value of the portfolio. We assume that $n\leq p$. The allocations of each of the assets in the portfolio are given by an $(\sF)_{t\geq 0}$-adapted process $\varphi\oft\in\bR^n$.

The dynamics of $(X,W)$ is given by the system of stochastic differential equations:
\begin{equation}\label{eq:xDyn}
\begin{split}
dX\oft&=a(X\oft)dt+b(X\oft)dZ\oft,\\
X\of0&=X_0.
\end{split}
\end{equation}
The drift and diffusion coefficients $\cU\ni x\to a(x)\in\bR^n$ and $\cU\ni x\to b(x)\in\mathrm{Mat}_{n,p}(\bR)$, respectively, satisfy the usual H\"older and quadratic growth conditions, which guarantee the existence and uniqueness of a strong solution to this system. Note that we are not requiring the presence of a riskless asset in the portfolio: such assumption is unrealistic and unnecessary. If one wishes to consider a riskless asset, it is sufficient to take a suitable limit of the relevant components of $a$ and $b$. The process $W$ is given by
\begin{equation}\label{eq:yDyn1}
\begin{split}
dW\oft&=\varphi\oft^\tT dX\oft,\\
W\of0&=W_0.
\end{split}
\end{equation}
Explicitly, equation \eqref{eq:yDyn1} reads:
\begin{equation}\label{eq:yDyn}
dW\oft=\varphi\oft^\tT a(X\oft)dt+\varphi\oft^\tT b(X\oft)dZ\oft.
\end{equation}
We refer to the process $W$ as the investor's wealth process.

We assume that the investor has a finite time horizon $T$ and the utility function $U$. We shall assume that $U$ is a member of the HARA family of utility functions, see Appendix \ref{sec:UtilityFunctions} for the their definition and summary of properties. The investor's objective is to maximize the expected utility of his wealth at time $T$. We are thus led to the following cost functional:
\begin{equation}
J[\varphi]=\eE\big[U(W(T))\big],
\end{equation}
which represents the investor's objective function.

Let
\begin{equation}\label{eq:covDef}
\cC\ofx=b\ofx^\tT b\ofx
\end{equation}
denote the instantaneous covariance matrix of the price processes. For technical reasons, we shall make the following additional assumptions on the functions $a:\cU\to\bR^n$ and $b:\cU\to\mathrm{Mat}_{n,p}(\bR)$:
\begin{itemize}
\item[(A1)]{The functions $a\ofx$ and $b\ofx$ are three times continuously differentiable for all $x\in\cU$.}
\item[(A2)]{The matrix $\cC\ofx$ is positive definite for all $x\in\cU$.}
\end{itemize}
In particular, the function $x\to\cC\ofx^{-1}$ is three times continuously differentiable.

Our goal thus is to find the optimal policy $\varphi^\ast$ which maximizes the expected utility if the terminal value of $W$. In other words, we are seeking the $\varphi^\ast$ such that
\begin{equation}
\varphi^\ast=\mathop{\arg\sup}_\varphi\;\eE\big[U(W(T))\big].
\label{eq:OptimizationProblem}
\end{equation}

We solve this optimization problem by invoking stochastic dynamic programming, see eg. \cite{FS92}, \cite{YZ99} or \cite{P09}. The key element of this approach is the value function $J(t,x,w)$. It is determined by two requirements:
\begin{itemize}
\item[(B1)]{it satisfies Bellman's principle of optimality,
\begin{equation}
J(t,X\oft,W\oft)=\sup_{\varphi}\;\eE\big[J(t+dt,X(t+dt),W(t+dt))|\sF_t\big],
\end{equation}
for all $0\leq t<T$, and}
\item[(B2)]{it satisfies the terminal condition,
\begin{equation}
J(T,X(T),W(T))=U(W(T)).
\end{equation}}
\end{itemize}
These conditions lead to the following nonlinear PDE for the value function,
\begin{equation}
\dot{J}+\sup_{\varphi}\,\big\{a^\tT\nabla_x J+\varphi^\tT a\nabla_w J+\frac{1}{2}\,\tr(\cC\nabla^2_x J)+\varphi^\tT\cC\nabla^2_{xw} J+\frac{1}{2}\,\varphi^\tT\cC\varphi\nabla^2_w J\big\}=0,
\end{equation}
namely the stochastic Hamilton-Jacobi-Bellman equation, subject to the terminal condition
\begin{equation}
J(T,x,w)=U(w).
\label{eq:defValueFunctionTermimal}
\end{equation}

In order to solve the HJB equation, we choose $\varphi=\varphi^\ast$ which formally maximizes the expression inside the curly parentheses above. In other words, $\varphi^*$ satisfies
\begin{equation*}
(\nabla^2_w J)\cC\varphi+a\nabla_w J+\cC\nabla^2_{xw} J=0.
\end{equation*}
This leads to the following condition:
\begin{equation}
\varphi^\ast=-\frac{\nabla^2_{xw} J}{\nabla^2_w J}-\frac{\nabla_w J}{\nabla ^2_w J}\,\cC^{-1} a,
\end{equation}
known as the first order condition. Substituting $\varphi^*$ back to the HJB equation yields
\begin{equation}\label{eq:hjbRed}
\dot{J}+a^\tT\nabla_x J+\frac{1}{2}\,\tr(\cC\nabla_x^2 J)-\frac{1}{2\nabla^2_w J}\,(\nabla^2_{xw}J+\cC^{-1}a\nabla_w J)^\tT\cC(\nabla^2_{xw}J+\cC^{-1} a\nabla_w J)=0.
\end{equation}
We solve this equation by means of the following Ansatz:
\begin{equation}
J(t,x,w)=\Gamma(t,x)U\ofw.
\end{equation}
Using Proposition \ref{thm:haraProps} we find that $\Gamma$ satisfies the following non-linear PDE:
\begin{equation}\label{eq:hjbGamma}
\dot{\Gamma}+a^\tT\nabla_x \Gamma+\frac{1}{2}\,\tr(\cC\nabla^2\Gamma)+\frac{\kappa}{2}\,(\nabla\log\Gamma+\cC^{-1} a)^\tT \cC(\nabla\log\Gamma+\cC^{-1} a)\Gamma=0,
\end{equation}
subject to the terminal condition
\begin{equation}
\Gamma(T,x)=1.
\end{equation}
The constant $\kappa$ depends only on the utility function and is given explicitly by \eqref{eq:kappaDef}. Since it will lead to no ambiguity, we have suppressed the subscript $x$ in the derivatives with respect to $x$.

Note that the optimal control $\varphi^*$ has the following expression in terms of $\Gamma$ and $U$:
\begin{equation}\label{eq:optCont}
\varphi^\ast=\frac{1}{A_U\ofw}\big(\nabla\log\Gamma+\cC^{-1} a\big),
\end{equation}
where $A_U\ofw$ is the absolute risk aversion coefficient of the utility $U$.

\section{\label{sec:wkbExp}WKB expansion of the HJB equation}

We shall write down the solution to equation \eqref{eq:hjbGamma} in terms of an asymptotic expansion, namely the WKB expansion. The first few terms of this expansion yield an approximate solution which is sometimes referred as the semiclassical or eikonal approximation.

The WKB asymptotic expansion is based on the assumption that the covariance matrix $\cC$ is ``small'' in some sense. To this end we scale the covariance matrix,
\begin{equation}
\cC\to\varepsilon \cC,
\end{equation}
where $\varepsilon$ is a parameter used to keep track of the order of magnitude in terms of $\cC$. At the end of the calculation, $\varepsilon$ is set back to $1$. Then, equation \eqref{eq:hjbGamma} takes the form:
\begin{equation}\label{eq:hjbGammaEps}
\dot{\Gamma}+a^\tT\nabla \Gamma+\frac{\varepsilon}{2}\,\tr(\cC\nabla^2\Gamma)+\frac{\varepsilon\kappa}{2}\,(\nabla\log\Gamma+\varepsilon^{-1}\cC^{-1} a)^\tT \cC(\nabla\log\Gamma+\varepsilon^{-1}\cC^{-1} a)\Gamma=0.
\end{equation}
We seek a solution to the equation above in the form
\begin{equation}
\Gamma(t,x)=\exp\big(\tfrac{1}{\varepsilon}\,S(t,x)\big),
\end{equation}
where $S(t,x)$ has a finite limit as $\varepsilon\to 0$. Substituting this Ansatz into \eqref{eq:hjbGammaEps}, we find that the equation for $S$ reads
\begin{equation}\label{eq:hjbS}
\dot{S}+(\kappa+1)a^\tT\nabla S+\frac{\kappa+1}{2}\,(\nabla S)^\tT\cC\nabla S+\frac{\kappa}{2}\,a^\tT\cC^{-1} a
+\frac{\varepsilon}{2}\,\tr(\cC\nabla^2 S)=0.
\end{equation}
The optimal control expressed in terms of $S$ takes the following form:
\begin{equation}
\varphi^\ast=\frac{1}{A_U}\Big(\cC^{-1}a+\frac{1}{\varepsilon}\,\nabla S\Big).
\end{equation}

We assume that $S$ has an asymptotic expansion in powers of $\varepsilon$,
\begin{equation}
S(t,x)=S^0(t,x)+S^1(t,x)\varepsilon+S^2(t,x)\varepsilon^2+\mathrm{O}(\varepsilon^3).
\end{equation}
Substituting this expansion into equation \eqref{eq:hjbS} yields an infinite hierarchy of equations:
\begin{equation}\label{eq:wkbHierarchy}
\begin{split}
\dot{S}^0&+\frac{\kappa+1}{2}\,(\nabla S^0)^\tT\cC\nabla S^0+(\kappa+1)a^\tT\nabla S^0+\frac{\kappa}{2}\,a^\tT\cC^{-1} a=0,\\
\dot{S}^1&+(\kappa+1)(a+\nabla S^0)^\tT\cC\nabla S^1+\frac{1}{2}\,\tr(\cC\nabla^2 S^0)=0,\\
\dot{S}^2&+(\kappa+1)(a+\nabla S^0)^\tT\cC\nabla S^2+\frac{\kappa+1}{2}\,(\nabla S^1)^\tT\cC\nabla S^1+\frac{1}{2}\,\tr(\cC\nabla^2 S^1)=0,\\
&\ldots\,,\\
\end{split}
\end{equation}
where each of the $S^j$'s satisfies the terminal condition:
\begin{equation}\label{eq:bdCondS}
S^j(T,x)=0,\text{ for } j=0,1,2,\dots .
\end{equation}
The first of these equations is non-linear in $S^0$. Each of the subsequent equations is a linear PDE, with coefficients that depend on the solutions of the preceding equations.

We define the variables $p$ dual to $x$ by
\begin{equation}\label{eq:pdef}
p\triangleq\nabla S^0,
\end{equation}
and refer to $p$ as the canonical momenta conjugate with $x$. We can then write the first of the equations \eqref{eq:wkbHierarchy} as
\begin{equation}\label{eq:hjEqu}
\dot{S}^0+H(x,\nabla S^0)=0,
\end{equation}
where the Hamiltonian $H(x,p)$ is given by
\begin{equation}\label{eq:hamDef}
H(x,p)=\frac{1}{2\gamma}\,p^\tT\cC\ofx p+\frac{1}{\gamma}\,p^\tT a\ofx+V(x),
\end{equation}
where
\begin{equation}\label{eq:vDef}
V\ofx=\frac{\kappa}{2}\,a\ofx^\tT\cC\ofx^{-1} a\ofx.
\end{equation}
This non-linear PDE is the classical Hamilton-Jacobi equation, see e.g. \cite{CH53}, \cite{E92}. Its solution gives the leading order approximation to the solution of the stochastic Hamilton-Jacobi-Bellman equation. From the physics point of view, the Hamiltonian \eqref{eq:hamDef} describes the dynamics of a particle of mass $\gamma$ moving on a suitable Riemannian manifold in the potential $V\ofx$ and subject to an additional velocity dependent force\footnote{Alternatively, one can interpret it as a motion on $\cU$ in a magnetic field with potential $-a\ofx$ subject to the external potential $-\frac{1}{2}\,a\ofx^\tT\cC\ofx^{-1} a\ofx$.}. The solutions to the remaining linear equations in the hierarchy yield sub-leading ``stochastic'' corrections to the classical solution:
\begin{equation}\label{eq:wkbApprPhi}
\Gamma(t,x)=\exp\big(\tfrac{1}{\varepsilon}\,S^0(t,x)+S^1(t,x)\big)\big(1+O(\varepsilon)\big).
\end{equation}
This approximation is analogous to the eikonal approximation in classical optics or the semiclassical approximation in classical mechanics.

\section{\label{sec:solHier}Solving the WKB hierarchy}

We shall now describe the methodology for solving the WKB hierarchy \eqref{eq:wkbHierarchy}. Since each equation in the hierarchy is a first order PDE, the appropriate approach consists in applying the method of characteristics, see e.g. \cite{CH53} and \cite{E92}.

We begin by solving the Hamilton-Jacobi equation \eqref{eq:hjEqu}. To this end, we recall that its characteristic equations are given by:
\begin{equation}\label{eq:charEqHJ}
\begin{split}
\dot{x}\ofs&=\nabla_p H(x\ofs,p\ofs),\\
\dot{p}\ofs&=-\nabla_x H(x\ofs,p\ofs),\\
\dot{z}\ofs&=p\ofs^\tT\nabla_p H(x\ofs,p\ofs)-H(x\ofs,p\ofs),
\end{split}
\end{equation}
where $z\ofs=S^0(s,x\ofs)$. These equations are subject to the terminal condition:
\begin{equation}\label{eq:termValCharEqHJ}
\begin{split}
x(T)&=y\,,\\
p(T)&=0\,,\\
z(T)&=0,
\end{split}
\end{equation}
where the terminal conditions for $p$ and $z$ are consequences of \eqref{eq:bdCondS} and \eqref{eq:pdef}.

The first two of the characteristic equations \eqref{eq:charEqHJ} are canonical Hamilton's equations associated with the Hamiltonian $H$. Classic results of the theory of ordinary differential equations, see eg. \cite{CL55}, guarantee the existence and uniqueness of the solution to the above terminal value problem, at least for $T$ sufficiently small. Furthermore, the solution depends smoothly on the terminal value $y$.

In order to analyze Hamilton's equations, assume first that $\kappa\neq-1$. They read then:
\begin{equation}\label{eq:hamEqs1}
\begin{split}
\dot{x}&=\frac{1}{\gamma}\,(\cC\ofx p+a\ofx),\\
\dot{p}&=-\nabla_x\Big(\frac{1}{2\gamma}\,p^\tT\cC\ofx p+\frac{1}{\gamma}\,p^\tT a\ofx+V\ofx\Big),\\
\end{split}
\end{equation}
or, explicitly,
\begin{equation}\label{eq:hamEqs2}
\begin{split}
\dot{x}&=\frac{1}{\gamma}\,(\cC\ofx p+a\ofx),\\
\dot{p}_i&=-\frac{1}{2\gamma}\,p^\tT\,\frac{\d\cC\ofx}{\d x^i}\, p-\frac{1}{\gamma}\,p^\tT\,\frac{\d a\ofx}{\d x^i}\, -\frac{\kappa}{2}\,a\ofx^\tT\,\frac{\d\cC\ofx^{-1}}{\d x^i}\,a\ofx-\kappa a\ofx^\tT\cC\ofx^{-1}\,\frac{\d a\ofx}{\d x^i}\,,\\
\end{split}
\end{equation}
for $i=1,\ldots,n$. In Section \ref{sec:numSec} we shall describe an efficient algorithm to solve these equations numerically.

It is now easy to write down the solution to the Hamilton-Jacobi equation. Indeed, the integral
\begin{equation}\label{eq:s0Expl}
\begin{split}
S^0(t,x\oft)&=-\int_t^T\Big(p\ofs^\tT dx\ofs-H(x\ofs,p\ofs)ds\Big)\\
&=-\int_t^T\Big(\frac{1}{2\gamma}\,p\ofs^\tT\cC(x\ofs)p\ofs-V(x\ofs)\Big)ds
\end{split}
\end{equation}
defines the solution to the Hamilton-Jacobi equation along the characteristic $x\ofs,p\ofs$. In order to find the solution $S^0(t,x)$, for each $x\in\cU$, we eliminate $y$ by inverting the function $y\to x\oft$. Specifically, we write the solution $x\oft$ in the form
\begin{equation}\label{eq:phiDef}
\begin{split}
x\oft&=x(t,y)\\
&=\Phi_t(y),
\end{split}
\end{equation}
which emphasizes the dependence of the trajectory on the terminal value. We have suppressed the terminal value for $p$ as it is always required to be zero. Then, for each $t<T$ is a diffeomorphism of $\cU$. We set
\begin{equation}\label{eq:solHjEq}
S^0(t,x)=S^0(t,x(t,\Phi^{-1}_t(x))).
\end{equation}
This is the desired solution to the Hamilton-Jacobi equation.

The second equation in \eqref{eq:wkbHierarchy} is an inhomogeneous linear first order partial differential equations and it can be readily solved by means of the method of characteristics. Note that, on a characteristic $(x\ofs, p\ofs)$,
\begin{equation*}
\dot{x}\ofs=\frac{1}{\gamma}\,(\cC(x\ofs) a(x\ofs)+\nabla S^0(s,x\ofs)).
\end{equation*}
Therefore, along $x\ofs$, the equation for $S^1$ can be written as an ordinary differential equation,
\begin{equation}
\frac{d}{ds}\,S^1(s,x\ofs)+\frac{1}{2}\,\tr\big(\cC(x\ofs)\nabla^2 S^0(s,x\ofs))\big)=0,
\end{equation}
and thus its solution reads:
\begin{equation}
S^1(t,x\oft)=\frac{1}{2}\,\int_t^T\tr\big(\cC(x\ofs)\nabla^2 S^0(s,x\ofs)\big)ds.
\end{equation}
In analogy with \eqref{eq:phiDef}, we write
\begin{equation}\label{eq:psiDef}
\begin{split}
p\oft&=p(t,y)\\
&=\Psi_t(y).
\end{split}
\end{equation}
Then
\begin{equation}
\begin{split}
p(t,x)&\triangleq\nabla_x S^0(t,x)\\
&=\Psi_t(\Phi^{-1}_t(x)),
\end{split}
\end{equation}
and we can write $S^1(t,x)$ as
\begin{equation}\label{eq:s1Sol}
S^1(t,x)=\frac{1}{2}\,\int_t^T\tr\big(\cC(x\ofs)\nabla p(s,x\ofs)\big)ds.
\end{equation}

Likewise, the solution to the third equation in \eqref{eq:wkbHierarchy} can be written explicitly as
\begin{equation}\label{eq:s2Sol}
\begin{split}
S^2(t,x\oft)&=\frac{1}{2\gamma}\,\int_t^T(\nabla S^1(s,x\ofs))^\tT\cC(x\ofs)\nabla S^1(s,x\ofs)ds\\
&+\frac{1}{2}\,\int_t^T \tr\big(\cC(x\ofs)\nabla^2 S^1(s,x\ofs)\big)ds,
\end{split}
\end{equation}
along a characteristic $x\ofs$. Note that the solution requires knowledge of $S^1$, which in turn requires knowledge of $S^0$. We can continue this process to solve for $S^n$, with the understanding that the complexity of the solution increases in $n$.

Let us now consider the case of $\kappa=-1$, which corresponds to the CARA utility function. This turns out to be a Hamilton's canonical equations read:
\begin{equation*}
\begin{split}
\dot{x}\ofs&=0,\\
\dot{p}\ofs&=\nabla V(x\ofs),\\
\dot{z}\ofs&=V(x\ofs),
\end{split}
\end{equation*}
where $V\ofx$ is defined by \eqref{eq:vDef}. Consequently,
\begin{equation*}
\begin{split}
x\ofs&=y,\\
p\ofs&=-\nabla V(y)(T-s),\\
z\ofs&=-V(y)(T-s)
\end{split}
\end{equation*}
and thus the solution to the Hamilton-Jacobi equation reads
\begin{equation}
S^0(t,x)=-V(x)(T-t).
\end{equation}

Furthermore, we find easily that
\begin{equation*}
S^1(t,x)=\frac{1}{4}\,\tr\big(\cC\ofx\nabla^2 V\ofx\big)(T-t)^2,
\end{equation*}
and
\begin{equation*}
S^2(t,x)=\frac{1}{24}\,\big(\tr(\cC\ofx\nabla^2)^2\big) V\ofx(T-t)^3
\end{equation*}
are the solutions to the second and third equations of the WKB hierarchy, respectively.

\section{\label{sec:exampSec}Generalized Merton portfolio models}

In this section we illustrate the expansion method developed above with a class of portfolio models that are frequently discussed in the literature. Namely, we consider a portfolio of assets whose price dynamics are of the form:
\begin{equation}\label{eq:sepDyn}
\begin{split}
dX^i\oft&=\mu^i(X^i\oft)dt+\sigma^i(X^i\oft) dB_i\oft,\\
X^i\of0&=X^0,
\end{split}
\end{equation}
i.e. the drift $\mu^i(X^i)\in\bR$ and $\sigma^i(X^i)\in\bR$ are functions of $X^i$ only. The Brownian motions $B_i\oft$ above are correlated,
\begin{equation}
\eE[dB_i\oft dB_j\oft]=\rho_{ij}dt.
\end{equation}
This dynamics becomes a special case of \eqref{eq:xDyn}, if we set
\begin{equation}
B\oft=Z\oft L,
\end{equation}
where $Z\oft$ is the standard $n$-dimensional Brownian motion and $L$ is the lower triangular matrix in the Cholesky decomposition, $\rho=L^\tT L$. The model specification \eqref{eq:sepDyn} is natural if we believe that the return and volatility of an asset are local functions of that asset's price only, while the dependence between the assets is a function of the portfolio. Models of this type generalize dynamic portfolio models introduced and studied by Merton \cite{M69}, \cite{M71}.

The covariance matrix and its inverse in this model are given by
\begin{equation}\label{eq:locMetTens}
\begin{split}
\cC_{ij}\ofx&=\rho_{ij}\sigma^i(x^i)\sigma^j(x^j)\,,\\
(\cC\ofx^{-1})_{ij}&=\frac{(\rho^{-1})_{ij}}{\sigma^i(x^i)\sigma^j(x^j)}\,.
\end{split}
\end{equation}
Hence,
\begin{equation}
V\ofx=\frac{\kappa}{2}\,\mu\ofx^\tT\cC\ofx^{-1}\mu\ofx,
\end{equation}
and consequently,
\begin{equation}
\frac{\d}{\d x^i}\, V\ofx=\kappa\Big(\frac{d\mu^i(x^i)}{d x^i}-\mu^i(x^i)\,\frac{d\log\sigma^i(x^i)}{d x^i}\Big)\,\sum\nolimits_j(\cC\ofx^{-1})_{ij}\mu^j(x^j).
\end{equation}
Hence, Hamilton's equations for this model read:
\begin{equation}
\begin{split}
\dot{x}^i&=\frac{1}{\gamma}\Big((\cC\ofx p)_i+\mu^i(x^i)\Big),\\
\dot{p}_i&=-\frac{1}{\gamma}\,p_i\,\Big(\frac{d\log\sigma^i(x^i)}{dx^i}(\cC\ofx p)_i
+\frac{d\mu^i(x^i)}{dx^i}\Big)\\
&\quad-\kappa\Big(\frac{d\mu^i(x^i)}{d x^i}-\mu^i(x^i)\,\frac{d\log\sigma^i(x^i)}{d x^i}\Big)\,(\cC\ofx^{-1}\mu(x))_i.
\end{split}
\end{equation}
These equations are subject to the terminal value conditions:
\begin{equation}\label{eq:termValCond}
\begin{split}
x(T)=y,\\
p(T)=0.
\end{split}
\end{equation}

Let us consider two explicit examples: (i) a portfolio of lognormal assets, and (ii) a portfolio of mean reverting normal assets. A special feature of these examples is that the semiclassical approximations are, in fact, the exact solutions.

\noindent
\underline{\textbf{Optimal control of the multivariate lognormal process.}} As a special case of the model above, we consider a portfolio of $n$ assets each of which follows the lognormal process, i.e.
\begin{equation}\label{eq:lnDyn}
\begin{split}
\mu^i(x^i)&=\mu_i x^i,\\
\sigma^i(x^i)&=\sigma_i x^i,
\end{split}
\end{equation}
where $\mu_i$ and $\sigma_i$ are constant coefficients referred to as the return and lognormal volatility, respectively. This is essentially the original Merton model.

Note that, in this model,
\begin{equation}
V\ofx=\frac{\kappa}{2}\,\mu^\tT C^{-1}\mu
\end{equation}
is constant. Here we have set $C_{ij}=\rho_{ij}\sigma_i\sigma_j$, for $1\leq i,j\leq n$. Hamilton's equations read:
\begin{equation}
\begin{split}
\dot{x}^i&=\frac{1}{\gamma}\Big((\cC\ofx p)_i+\mu_i x^i\Big),\\
\dot{p}_i&=-\frac{1}{\gamma}\,p_i\,\Big(\frac{1}{x^i}\,(\cC\ofx p)_i+\mu_i\Big).
\end{split}
\end{equation}
Since $p(T)=0$, the second of the equations has the unique solution
\begin{equation}
p_i\ofs=0.
\end{equation}
Hence,
\begin{equation}
x^i\ofs=y^i e^{-(\mu_i/\gamma)(T-t)}
\end{equation}
is the unique solution to the first equation subject to the terminal condition $p(T)=y$. These are the characteristics of the Hamilton-Jacobi equation.

This implies that
\begin{equation}
S^0(t,x)=\frac{\kappa}{2}\,\mu^\tT C^{-1}\mu(T-t),
\end{equation}
and
\begin{equation}
S^j(t,x)=0,
\end{equation}
for all $j\geq 1$. Consequently,
\begin{equation}
\begin{split}
\varphi^\ast_i&=\frac{1}{A_U(w)}\,(\cC\ofx^{-1}\mu\ofx)_i\\
&=\frac{1}{A_U(w)}\,\frac{(C^{-1}\mu)_i}{x^j}\,.
\end{split}
\end{equation}
The semiclassical solution is exact and it coincides with Merton's original solution.

\noindent
\underline{\textbf{Optimal control of the multivariate Ornstein-Uhlenbeck process.}} Another tractable portfolio model arises as follows. We consider a portfolio of $n$ assets each of which follows the Ornstein-Uhlenbeck process, i.e. $\mu^i\ofx=\lambda_i(\bar{\mu}^i-x^i)$, and $\sigma^i(x^i)=\sigma_i$, where $\lambda_i$ is the speed of mean reversion of asset $i$, $\bar{\mu}^i$ is its mean reversion level, and $\sigma_i$ is its instantaneous volatility. Note that in this model $V\ofx$ is quadratic,
\begin{equation}
V\ofx=\frac{\kappa}{2}\,(\bar{\mu}-x)^\tT\Lambda \cC^{-1}\Lambda(\bar{\mu}-x),
\end{equation}
where $\Lambda\in\matn$ is the diagonal matrix with entries $\lambda_i$, $i=1,\ldots, n$.

As a result, Hamilton's equations can be solved in closed form. Indeed, we find that they form a linear system:
\begin{equation}\label{eq:linHam}
\frac{d}{dt}
\begin{pmatrix}
x\\
p
\end{pmatrix}
=A
\begin{pmatrix}
x\\
p
\end{pmatrix}
+m,
\end{equation}
where
\begin{equation}
\begin{split}
A&=
\begin{pmatrix}
-\gamma^{-1}\Lambda& \gamma^{-1}\cC\\
-\kappa\Lambda\cC^{-1}\Lambda& \gamma^{-1}\Lambda
\end{pmatrix},\\
m&=
\begin{pmatrix}
\gamma^{-1}\Lambda\bar{\mu}\\
\kappa\Lambda\cC^{-1}\Lambda\bar{\mu}
\end{pmatrix}.
\end{split}
\end{equation}
The solution to the system \eqref{eq:linHam} subject to the terminal conditions $p(T)=0$ and $x(T)=y$ reads:
\begin{equation}
\begin{pmatrix}
x\ofs\\
p\ofs
\end{pmatrix}
=e^{-(T-s)A}\Bigg(
\begin{pmatrix}
y\\
0
\end{pmatrix}
+A^{-1}m\Bigg)-A^{-1}m,
\end{equation}
where the exponential denotes the matrix exponential function. These are the characteristics of the Hamilton-Jacobi equation.

This representation allows us to explicitly construct the maps $\Phi_t$ and $\Psi_t$ in \eqref{eq:phiDef} and \eqref{eq:psiDef}, respectively. Indeed, they are linear in $y$ and, consequently, $\Phi_t^{-1}$ and $\Psi_t^{-1}$ are linear functions as well. As a consequence of \eqref{eq:s0Expl}, $S^0(t,x)$ is an explicitly computable quadratic function of $x$. Since in the current model $\cC$ is independent of $x$, formula \eqref{eq:s1Sol} implies that $S^1(t,x)$ is non-zero but independent of $x$. Inspection of the WKB hierarchy shows immediately that
\begin{equation}
S^j(t,x)=0,
\end{equation}
for all $j\geq 2$. Consequently, we obtain the following formula for the optimal control:
\begin{equation}
\varphi^\ast=\frac{1}{A_U(w)}\,\big(\cC^{-1}\Lambda(\bar{\mu}-x)+\nabla S^0(t,x)\big),
\end{equation}
with no further corrections. As in the case of the lognormal model, this semiclassical solution turns out to be exact.

\section{\label{sec:numSec}Numerical implementation of the solution}

Interesting cases for which the Hamilton-Jacobi equation \eqref{eq:hjEqu} admits a closed form solution are scarce. In fact, even in the case of constant covariance matrix, \eqref{eq:hjEqu} cannot, in general, be solved in closed form. In this section we discuss a numerical solution method for the Hamilton-Jacobi equation, and indeed (at least in principle) the entire WKB hierarchy, that is efficient and accurate for systems with a relatively large ($\lessapprox 200$) number of degrees of freedom.

Let
\begin{equation}
\begin{split}
x\oft&=\Phi_t(y),\\
p\oft&=\Psi_t(y),
\end{split}
\end{equation}
denote the solution to Hamilton's equations \eqref{eq:charEqHJ} with terminal condition \eqref{eq:termValCharEqHJ}. Our goal is to compute $S^0(t,x)$ and $S^1(t,x)$ for all $0\leq t\leq T$ and $x\in\cU$. This amounts to an effective numerical implementation of the solutions constructed in Section \ref{sec:solHier} by means of the method of characteristics. We proceed in the following steps.
\begin{itemize}
\item[]{\emph{Step 1}. For a given terminal value $y$ and each $t<T$, compute $x\oft=\Phi_t(y)$ and $p\oft=\Psi_t(y)$.}
\item[]{\emph{Step 2}. Given $x\in\cU$ and $t<T$ find $y$ such that $x\oft=x$. This is equivalent to inverting the function $y\to\Phi_t\ofy$.}
\item[]{\emph{Step 3}. Given $x\in\cU$ and $t<T$, compute $S^0(t,x)$.}
\item[]{\emph{Step 4}. Given $x\in\cU$ and $t<T$, compute $S^1(t,x)$.}
\end{itemize}
We shall now describe these steps in detail.

\noindent
\emph{Step 1}. In order to construct the pair $(\Phi_t, \Psi_t)$ we use the St\"ormer-Verlet / leapfrog method of integrating Hamilton's equations \cite{HLW03}, \cite{LR04}. Other popular numerical methods, such as Euler's method or the Runge-Kutta method, tend to perform poorly when applied to a Hamiltonian system. This can be traced to the fact that these methods do not respect the underlying symplectic structure, and, in particular, do not preserve the volume in the phase space of the system. The leapfrog method is an ingenious way of discretizing a Hamiltonian system, so that it defines a symplectic map. As a additional bonus, the St\"ormer-Verlet / leapfrog scheme is order $h^2$ accurate.

Specifically, we discretize the time range $[t,T]$,
\begin{equation}\label{eq:discT}
t_k=t+kh,\text{ if } k=0,1,\ldots,N,
\end{equation}
where the time step $h=(T-t)/N$ is chosen suitably. We replace the continuous time Hamiltonian system \eqref{eq:charEqHJ} by a discrete time dynamical system, and let $\hat{x}_k$ and $\hat{p}_k$ denote the approximate values of $x(t_k)$ and $p(t_k)$, respectively. We require that $\hat{x}_k$ and $\hat{p}_k$ follow the numerical scheme:
\begin{equation}\label{eq:genLeap}
\begin{split}
\hat{p}_{k-\frac{1}{2}}&=\hat{p}_{k}+\frac{h}{2}\,\nabla_x H(\hat{x}_k,\hat{p}_{k-\frac{1}{2}})\,,\\
\hat{x}_{k-1}&=\hat{x}_k-\frac{h}{2}\,\big(\nabla_p H(\hat{x}_k,\hat{p}_{k-\frac{1}{2}})+\nabla_p H(\hat{x}_{k-1},\hat{p}_{k-\frac{1}{2}})\big),\\
\hat{p}_{k-1}&=\hat{p}_{k-\frac{1}{2}}+\frac{h}{2}\,\nabla_x H(\hat{x}_{k-1},\hat{p}_{k-\frac{1}{2}}),
\end{split}
\end{equation}
where we have introduced half intervals values $\hat{p}_{k-\frac{1}{2}}$. The presence of these intermediate values of the momentum is the crux of the leapfrog method and it assures that the scheme is symplectic. Notice that the first and second equations in \eqref{eq:genLeap} are implicit in $\hat{p}_{k-\frac{1}{2}}$ and $\hat{x}_{k-1}$, respectively.

Calculating the derivatives yields
\begin{equation}\label{eq:LeapConcIF}
\begin{split}
\hat{p}_{k-\frac{1}{2}}&=\hat{p}_k+\frac{h}{2\gamma}\,\nabla a(\hat{x}_k)\hat{p}_{k-\frac{1}{2}}+\frac{h}{2\gamma}\,\hat{p}_{k-\frac{1}{2}}^\tT \nabla \cC(\hat{x}_k)\hat{p}_{k-\frac{1}{2}}+\frac{h}{2}\,\nabla V(\hat{x}_k),\\
\hat{x}_{k-1}&=\hat{x}_k-\frac{h}{2\gamma}\,\big(\cC(\hat{x}_k)
+\cC(\hat{x}_{k-1})\big)\hat{p}_{k-\frac{1}{2}}-\frac{h}{2\gamma}\,\big(a(\hat{x}_k)+a(\hat{x}_{k-1})\big),\\
\hat{p}_{k-1}&=\hat{p}_{k-\frac{1}{2}}+\frac{h}{2\gamma}\,\nabla a(\hat{x}_{k-1})\hat{p}_{k-\frac{1}{2}}+\frac{h}{2\gamma}\,\hat{p}_{k-\frac{1}{2}}^\tT \nabla \cC(\hat{x}_{k-1})\hat{p}_{k-\frac{1}{2}}+\frac{h}{2}\,\nabla V(\hat{x}_{k-1}).
\end{split}
\end{equation}
This system is subject to the terminal condition:
\begin{equation}
\begin{split}
\hat{x}_N&=y,\\
\hat{p}_N&=0.
\end{split}
\end{equation}
Note that the first two relations in \eqref{eq:LeapConcIF} cannot, in general, be solved explicitly for $\hat{p}_{k-\frac{1}{2}}$ and $\hat{x}_{k-1}$, respectively, and thus they need to be solved numerically. This can be efficiently accomplished, for example by means of Newton's method with the initial guess $\hat{p}_{k-\frac{1}{2}}=\hat{p}_k$ and $\hat{x}_{k-1}=\hat{x}_k$. Indeed, in practice, a few iterations of Newton's method yield a very accurate solution.

Solving this system yields an approximate flow map $\hat{\Phi}_t$. Throughout the reminder of this section we shall suppress the hat over $x,p$, etc. keeping in mind that all the quantities are numerical approximations to the true values.

\noindent
\emph{Step 2}. In order to carry out the next step, we develop an algorithm for inverting the flow map $\Phi_t:\cU\to\cU$ defined above. From the existence theory of ordinary differential equations, $x$ and $p$ depend smoothly on the terminal value $y$. Hence, the sensitivities $\nabla_y\Phi$ and $\nabla_y\Psi$ satisfy the following system of equations:
\begin{equation}\label{eq:varEqs}
\begin{split}
\frac{d}{dt}\,\nabla_y\Phi&=\nabla^2_{px}H\,\nabla_y\Phi+\nabla^2_{pp}H\,\nabla_y\Psi,\\
\frac{d}{dt}\,\nabla_y\Psi&=-\nabla^2_{xx}H\,\nabla_y\Phi-\nabla^2_{xp}H\,\,\nabla_y\Psi,
\end{split}
\end{equation}
subject to the terminal condition
\begin{equation}
\begin{split}
\nabla_y\Phi(T,y)&=\id,\\
\nabla_y\Psi(T,y)&=0.
\end{split}
\end{equation}
Equations of this type are known as variational equations, see e.g. \cite{G27}.

Consider now an approximation of the variational system \eqref{eq:varEqs}, in which the second derivatives of $H$ are evaluated at the constant trajectory $(x\oft,p\oft)=(y,0)$. We thus obtain the following linear system with constant coefficients:
\begin{equation}
\begin{split}
\dot{F}&=Q\ofy F+R\ofy G,\\
\dot{G}&=-U\ofy F-Q\ofy G,
\end{split}
\end{equation}
where the matrices $Q$, $R$, and $U$ are given explicitly by
\begin{equation}
\begin{split}
Q\ofy&=(\kappa+1)\nabla a\ofy^\tT,\\
R\ofy&=(\kappa+1)\cC\ofy,\\
U\ofy&=\nabla^2 V\ofy.
\end{split}
\end{equation}
Note that $F_t\ofy\equiv F(t,y)$ is an approximation to $\nabla\Phi_t\ofy$, the gradient of the function $y\to\Phi_t\ofy$. This linear system can be written in a more compact form as
\begin{equation}
\frac{d}{dt}
\begin{pmatrix}
F\\
G\\
\end{pmatrix}
=M\ofy
\begin{pmatrix}
F\\
G\\
\end{pmatrix},
\end{equation}
where
\begin{equation}
M\ofy=
\begin{pmatrix}
Q\ofy & R\ofy\\
-U\ofy & -Q\ofy\\
\end{pmatrix},
\end{equation}
subject to the terminal condition
\begin{equation}
\begin{pmatrix}
F_T\ofy\\
G_T\ofy\\
\end{pmatrix}
=
\begin{pmatrix}
\id\\
0\\
\end{pmatrix}.
\end{equation}
This problem has a unique solution, namely
\begin{equation}\label{eq:linSol}
\begin{pmatrix}
F_t\ofy\\
G_t\ofy\\
\end{pmatrix}
=e^{-(T-t)M\ofy}
\begin{pmatrix}
\id\\
0\\
\end{pmatrix},
\end{equation}
where, as before, the exponential denotes the matrix exponential function. This solution can readily be implemented in computer code \cite{GL}.

Now, our next goal is to solve for $y$ the equation
\begin{equation}
\Phi_t\ofy-x=0.
\end{equation}
To this end, we use a Newton-type method. Finding the gradient $\nabla\Phi_t\ofy$ is computationally very expensive, and it may be susceptible to numerical inaccuracies. Fortunately, for convergence purposes, it is sufficient to approximate it with $F_t\ofy$, which we have just computed explicitly. In the Appendix we justify this procedure, by proving that it converges for $T$ sufficiently small. The following pseudocode implements this search algorithm:
\begin{equation}
\begin{split}
&eps\gets 10^{-13}\\
&y\gets x\\
&err\gets1.0\\
&\texttt{while}(err>eps)\\
&\qquad z\gets y-F_t\ofy^{-1}(\Phi_t\ofy-x)\\
&\qquad err\gets\|z-y\|\\
&\qquad y\gets z
\end{split}
\end{equation}
The norm $\|\cdot\|$ above denotes the usual Euclidean norm in $\bR^n$.

\noindent
\emph{Step 3}. We are now ready to compute the value of $S^0(t,x)$. In Step 3 we have found $y=\Phi_t^{-1}(x)$. Using the algorithm explained in Step 1, we construct the discrete trajectory $(x_{t_k},p_{t_k})$. We write the integral \eqref{eq:s0Expl} as a sum of integrals over the segments $[t_k,t_{k+1}]$,
\begin{equation}
S^0(t,x)=\sum_{k=0}^{N-1}\,I^0_{t_k,t_{k+1}}.
\end{equation}
We denote the integrand in \eqref{eq:s0Expl} by $L(x\ofs,p\ofs)$, and calculate each of the subintegrals according to Simpson's rule:
\begin{equation}
\begin{split}
I^0_{a,b}&=\int_a^b L(x(s),p(s))ds\\
&\approx\frac{1}{6}\big(L(x(a),p(a))+4L(x(m),p(m))+L(x(b),p(b))\big)(b-a),
\end{split}
\end{equation}
where $m=\tfrac{a+b}{2}$ is the midpoint between $a$ and $b$.

\noindent
\emph{Step 4}. We break up the integral in \eqref{eq:s1Sol} into the sum of integrals over $[t_k,t_{k+1}]$,
\begin{equation}
S^1(t,x)=\sum_{k=0}^{N-1}\,I^1_{t_k,t_{k+1}}.
\end{equation}
Reusing the discrete trajectory $(x_{t_k},p_{t_k})$ calculated in Step 3, we compute each subintegral using Simpson's rule:
\begin{equation}
\begin{split}
I^1_{a,b}&=\frac{1}{2}\int_a^b\tr\big(\cC(x\ofs)\nabla p(s,x\ofs)\big)ds\\
&\approx\frac{1}{6}\,\tr\big(\cC(x(a))\nabla p(a,x(a))+4\cC(x(m))\nabla p(m,x(m))+\cC(x(b))\nabla p(b,x(b))\big)(b-a).
\end{split}
\end{equation}
The first order partial derivatives in $\nabla p$ in the expression above are calculated as central finite differences:
\begin{equation}
\frac{\d}{\d x^j}\, p_j(t_k,x(t_k))\approx\frac{p_{t_{k+1}}-p_{t_{k-1}}}{2(x_{t_{k+1}}-x_{t_{k-1}})}\,.
\end{equation}

\appendix

\section{\label{sec:UtilityFunctions}The HARA family of utility functions}

We let $U\ofv$ denote a utility function, i.e. a twice differentiable concave function. Recall that the \emph{absolute risk aversion} coefficient associated with $U\ofv$ is defined by
\begin{equation}
A_U\ofv=-\frac{U^{\prime\prime}\ofv}{U^{\prime}\ofv}\;,
\label{eq:ara}
\end{equation}
while the \emph{relative risk aversion} coefficient is given by
\begin{equation}
\begin{split}
R_U\ofv&=-\frac{vU^{\prime\prime}\ofv}{U^{\prime}\ofv}\\
&=vA_U\ofv\;.
\end{split}
\label{eq:rra}
\end{equation}

In this paper we consider the following four utility functions:
\begin{itemize}
\item[(1)]{The \emph{hyperbolic absolute risk aversion} (HARA) utility,
\begin{equation}
U_{\rm{HARA}}\left(v;a,b,\gamma\right)=\frac{\gamma}{1-\gamma}\left(a+\frac{b}{\gamma}\,v \right)^{1-\gamma}\;.
\label{eq:HARA}
\end{equation}}
\item[(2)]{The \emph{constant relative risk aversion} (CRRA) utility,
\begin{equation}
U_{\rm{CRRA}}\left(v;\gamma\right)=\frac{v^{1-\gamma}}{1-\gamma}\;.
\label{eq:CRRA}
\end{equation}
Note that $\gamma$ is the (constant) relative risk aversion coefficient associated with this utility function, $R_U\ofv=\gamma$.}
\item[(3)]{The \emph{constant absolute risk aversion} (CARA) utility,
\begin{equation}
U_{\rm{CARA}}\left(v;\gamma\right)=-\frac{e^{-\gamma v}}{\gamma}\;,
\label{eq:CARA}
\end{equation}
Note that $\gamma$ is the (constant) absolute risk aversion coefficient associated with this utility function, $A_U\ofv=\gamma$.}
\item[(4)]{The \emph{logarithmic} (Bernoulli) utility,
\begin{equation}
U_{\rm{LOG}}\ofv=\log\ofv.
\end{equation}}
\end{itemize}

It is well known that the HARA utility includes the CRRA, CARA, and logarithmic utility functions as limit cases. Indeed,
\begin{equation*}
\begin{split}
U_{\rm{CRRA}}\left(v;\gamma\right)&=U_{\rm{HARA}}\left(v;0,\gamma^{-\gamma/\left(1-\gamma\right)},\gamma\right),\\
U_{\rm{CARA}}\left(v;\gamma\right)&=\frac{1}{\gamma}\;\lim_{c\to\infty}U_{\rm{HARA}}\left(v;1,\gamma,c\right),\\
U_{\rm{LOG}}\ofv&=\lim_{\gamma\to 1}U_{\rm{CRRA}}\left(v;\gamma\right).\\
\end{split}
\end{equation*}

The following proposition is used in Section \ref{sec:hjbEq}.
\begin{proposition}\label{thm:haraProps}
Let $U$ be a twice differentiable function. The ratio
\begin{equation}
-\frac{U'\ofv^2}{U''\ofv U\ofv}
\end{equation}
is a constant if and only if $U$ is a HARA utility function. In this case, its value $\kappa$ is given by
\begin{equation}
\kappa=
\begin{cases}
\left(1-\gamma\right)/\gamma,& \text{ for the HARA and CRRA utilities,}\\
-1,& \text{ for the CARA utility,}\\
0,& \text { for the log utility.}
\end{cases}\label{eq:kappaDef}
\end{equation}
\end{proposition}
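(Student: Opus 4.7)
The plan is to prove both implications: the forward direction by direct case-by-case computation on each listed utility, and the converse by reducing the constancy hypothesis to a solvable ODE for $U$.

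For the forward direction, I would substitute each of the HARA, CRRA, and CARA utilities into the ratio and verify the claimed value. For the generic HARA form $U=\tfrac{\gamma}{1-\gamma}(a+\tfrac{b}{\gamma}v)^{1-\gamma}$, differentiation gives $U'=b(a+\tfrac{b}{\gamma}v)^{-\gamma}$ and $U''=-b^2(a+\tfrac{b}{\gamma}v)^{-\gamma-1}$, so that $-U'^2/(U''U)$ reduces after obvious cancellation to $(1-\gamma)/\gamma$. The CRRA case is the specialization $a=0$, and the CARA case is a two-line computation yielding $-1$. The logarithmic utility does not literally satisfy the hypothesis for any constant $\kappa$ (one checks that $-U'(v)^2/(U''(v)U(v)) = 1/\log v$ for $U=\log v$, which is not constant in $v$); it enters only as the $\gamma\to 1$ (equivalently $\kappa\to 0$) limit of the CRRA family, and this point should be flagged explicitly.

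For the converse, rewrite the constancy hypothesis as the second-order ODE
\begin{equation*}
U'(v)^2 + \kappa\, U''(v)\, U(v) = 0.
\end{equation*}
The key step is to introduce the logarithmic derivative $g(v)=U'(v)/U(v)$ on an interval where $U$ is positive. From the identity $g'=U''/U-g^2$ combined with $U''/U = -g^2/\kappa$ (obtained by dividing the ODE by $U^2$), the unknown $U$ is eliminated and one obtains the separable first-order ODE
\begin{equation*}
g'(v) = -\frac{1+\kappa}{\kappa}\, g(v)^2.
\end{equation*}
For $\kappa\notin\{-1,0\}$ this integrates to $g(v) = \kappa/((1+\kappa)v + C_1)$, and a second integration of $U'/U = g$ recovers $U$ as a constant multiple of $((1+\kappa)v + C_1)^{\kappa/(1+\kappa)}$. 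Reparametrizing by $\gamma = 1/(1+\kappa)$ puts this precisely in the HARA form \eqref{eq:HARA} with $\kappa=(1-\gamma)/\gamma$, as claimed.

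The main obstacle is the treatment of the degenerate values of $\kappa$. When $\kappa=-1$ the reduced ODE collapses to $g'=0$, so $g$ is constant, $\log U$ is affine, and $U$ is (up to an overall constant) an exponential; identifying the constant absolute risk aversion $A_U=-U''/U'$ with the parameter $\gamma$ places $U$ in the CARA family and confirms $\kappa=-1$. When $\kappa=0$ the original equation forces $U'\equiv 0$, so no nondegenerate utility satisfies the hypothesis exactly, and the entry $\kappa=0$ for log utility in \eqref{eq:kappaDef} must be interpreted in the limiting sense already noted. I would also observe that the overall multiplicative constant left free by the ODE corresponds exactly to the freedom in choosing the prefactor $\gamma/(1-\gamma)$ in \eqref{eq:HARA}, so the reconstruction matches the HARA definition up to its standard normalization.
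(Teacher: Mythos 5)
Your proof is correct and complete. The paper itself omits the proof, remarking only that it is a ``straightforward calculation,'' so there is no paper argument to compare against; but your two-part strategy---direct substitution for the forward direction and reduction to a separable first-order ODE in $g=U'/U$ for the converse---is the natural route and is almost certainly what the authors had in mind. The computations check out: for HARA one indeed gets $U'=b(a+\tfrac{b}{\gamma}v)^{-\gamma}$, $U''=-b^2(a+\tfrac{b}{\gamma}v)^{-\gamma-1}$, hence $-U'^2/(U''U)=(1-\gamma)/\gamma$; for CARA the exponentials cancel to give $-1$; and the ODE $g'=-\tfrac{1+\kappa}{\kappa}g^2$ integrates to the affine-power form that reparametrizes as HARA with $\gamma=1/(1+\kappa)$, while $\kappa=-1$ collapses to $g'\equiv 0$ and yields the exponential CARA family.

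Your remark about the logarithmic utility deserves emphasis, since it identifies a genuine imprecision in the proposition as stated: for $U(v)=\log v$ the ratio equals $1/\log v$, which is \emph{not} constant, so the ``if'' direction fails literally for that member of the family, and the entry $\kappa=0$ in \eqref{eq:kappaDef} can only be understood as the $\gamma\to 1$ (equivalently $\kappa\to 0$) limit within the CRRA subfamily, as you note. One small polish worth adding to the converse: the substitution $g=U'/U$ requires working on an interval where $U$ is nonvanishing and of fixed sign (several HARA members, e.g.\ CARA, are negative-valued), but since the ODE $U'^2+\kappa U''U=0$ is invariant under $U\mapsto -U$ this costs nothing and the argument is unaffected. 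With that caveat made explicit, the proof is airtight.
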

The proof of this proposition is a straightforward calculation and we omit it.

\section{\label{sec:convSec}Convergence of the modified Newton method}

The purpose of this Appendix is to prove that the Newton-type method described in Section \ref{sec:numSec} converges, at least in the case if the time horizon $T$ is sufficiently short. Our proof uses familiar techniques of numerical analysis \cite{SB02} and systems of ordinary differential equations \cite{CL55}.

We first state the following general fact.
\begin{proposition}\label{thm:newtMeth}{Let $\cB\subset\bR^n$ be a compact set, and let $h:\cB\to\cB$ be a twice continuously differentiable function. Assume that $h$ has a unique simple zero $y^*\in\cB$,
\begin{equation}
\begin{split}
h(y^*)&=0,\\
\nabla h(y^*)&\neq 0.
\end{split}
\end{equation}
Let $F:\cB\to\matn$ be a continuously differentiable function such that $F\ofy^{-1}$ exists for all $y\in\cB$, and the following two conditions are satisfied. There is a $0<\delta<1$, such that
\begin{equation}\label{eq:cond1}
\|\id-F\ofy^{-1}\nabla h\ofy\|\leq\delta/2,
\end{equation}
and
\begin{equation}\label{eq:cond2}
\|\nabla\big(F\ofy^{-1}\big)h\ofy\|\leq\delta/2,
\end{equation}
for all $y\in\cB$. Then the map
\begin{equation}
f\ofy=y-F\ofy^{-1}h\ofy
\end{equation}
is a contraction of $\cB$ into itself.}
\end{proposition}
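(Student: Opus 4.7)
The plan is to show that on $\cB$ the Jacobian $\nabla f$ has operator norm at most $\delta < 1$, and then conclude both the contraction property and invariance of $\cB$ from a mean-value-type argument. Conditions \eqref{eq:cond1} and \eqref{eq:cond2} are tailored to bound the two pieces that arise when one differentiates $F\ofy^{-1}h\ofy$ by the product rule, so the strategy is essentially dictated by the hypotheses.

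The first step is to compute $\nabla f\ofy$. Writing the $i$th component as $f_i\ofy=y^i-\sum_j (F\ofy^{-1})_{ij}\,h_j\ofy$ and applying the product rule to each summand, one gets in matrix form
\begin{equation*}
\nabla f\ofy=\id-F\ofy^{-1}\nabla h\ofy-\bigl(\nabla(F\ofy^{-1})\bigr)h\ofy,
\end{equation*}
where the last term is the matrix with $(i,k)$ entry $\sum_j\d_k(F\ofy^{-1})_{ij}\,h_j\ofy$. Using the triangle inequality together with \eqref{eq:cond1} and \eqref{eq:cond2} gives $\|\nabla f\ofy\|\leq \delta/2+\delta/2=\delta$ uniformly on $\cB$.

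Next I would invoke the fundamental theorem of calculus: for any two points $y_1,y_2$ such that the straight segment $[y_1,y_2]$ lies in $\cB$,
\begin{equation*}
f(y_2)-f(y_1)=\int_0^1\nabla f\bigl((1-t)y_1+ty_2\bigr)(y_2-y_1)\,dt,
\end{equation*}
so that $\|f(y_2)-f(y_1)\|\leq\delta\|y_2-y_1\|$. This gives the Lipschitz constant $\delta<1$. For the second part, $f$ maps $\cB$ into itself: here one uses that $y^\ast$ is a fixed point of $f$, since $h(y^\ast)=0$ forces $f(y^\ast)=y^\ast$. Taking $\cB$ to be a closed ball $\bar{B}(y^\ast,r)$ (which is the geometrically relevant case in Section \ref{sec:numSec}, where $\cB$ is shrunk around the root $y^\ast$), we obtain for every $y\in\cB$
\begin{equation*}
\|f\ofy-y^\ast\|=\|f\ofy-f(y^\ast)\|\leq\delta\|y-y^\ast\|\leq\delta r<r,
\end{equation*}
and hence $f\ofy\in\cB$.

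The main technical point to handle carefully is this last invariance step, because the proposition as stated only assumes $\cB$ is compact. The Lipschitz estimate needs convexity (or at least segment-connectedness through $\cB$) to apply the integral mean-value bound, and the invariance argument needs $y^\ast$ to be the center of a ball contained in $\cB$. I would therefore make explicit at the outset that $\cB$ is taken to be a closed Euclidean ball around $y^\ast$ of sufficiently small radius $r$, so that both \eqref{eq:cond1}--\eqref{eq:cond2} hold on $\cB$ (using continuity of $F^{-1}$, $\nabla h$, and $\nabla F^{-1}$ at $y^\ast$, together with $h(y^\ast)=0$) and the invariance argument goes through; the rest is then routine.
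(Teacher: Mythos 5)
Your proof takes the same route as the paper's: differentiate $f$ by the product rule, bound the two resulting terms using \eqref{eq:cond1} and \eqref{eq:cond2}, and conclude the contraction property from a bound on $\|\nabla f\|$. You are in fact more careful than the paper on three points. First, under the stated hypotheses the correct Jacobian bound is $\delta/2+\delta/2=\delta$, as you compute; the paper writes $1-\delta$, which is a harmless slip (both constants lie in $(0,1)$) but suggests the authors intended each condition to read $(1-\delta)/2$. Second, you correctly note that the integral mean-value estimate used to pass from $\|\nabla f\|\leq\delta$ to a Lipschitz bound requires segments between points of $\cB$ to stay in $\cB$, i.e.\ convexity or at least star-shapedness, which compactness alone does not give; the paper silently assumes this. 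Third, the paper never actually checks that $f$ maps $\cB$ into itself, which is half of what must be proved; your observation that $y^*$ is a fixed point and that a closed ball $\bar{B}(y^*,r)$ is then $f$-invariant supplies the missing step and matches the way the proposition is used in Section~\ref{sec:numSec}. So: same approach, but your version closes gaps that the paper leaves open.
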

\begin{proof}
We verify easily that conditions \eqref{eq:cond1} and \eqref{eq:cond2} imply that $\|\nabla f\ofy\|\leq 1-\delta$, uniformly in $y\in\cB$. Hence, $f$ is a contraction.
\end{proof}

As a consequence of this proposition and the contraction principle, the sequence
\begin{equation}\label{eq:nrSeq}
\begin{split}
y_1&=f(y_0)\\
&=y_0-F(y_0)^{-1}h(y_0),\\
y_2&=f(y_1)\\
&=y_1-F(y_1)^{-1}h(y_1),\\
&\ldots,
\end{split}
\end{equation}
where $y_0\in\cB$ is arbitrary, converges to $y^*$.

Next, we shall show that the above proposition applies to our specific situation. The sequence \eqref{eq:nrSeq} will furnish the modified Newton method used in Section \ref{sec:numSec}.
\begin{proposition}{Assume that the Hamiltonian $H(x,p)$ is three times continuously differentiable, and let $h\ofy=\Phi_t\ofy-x$ and $F\ofy=F_t\ofx$. Then, there are a $T>0$ and a compact set $\cB\ni y$ such that conditions \eqref{eq:cond1} and \eqref{eq:cond2} of Proposition \ref{thm:newtMeth} are satisfied.
}
\end{proposition}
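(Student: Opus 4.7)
The plan is to exploit the fact that at $t=T$ both the true flow $\Phi_t$ and the frozen-coefficient approximation $F_t$ reduce to the identity, so that for a sufficiently short horizon $\tau:=T-t$ every relevant quantity can be expanded in powers of $\tau$ uniformly over a small neighbourhood of the target point $x$. I would fix $x\in\cU$, set $\cB=\overline{B}(x,r)$ to be a closed Euclidean ball of radius $r$ around $x$, and ultimately take both $\tau$ and $r$ small. The three-times continuous differentiability of $H$ guarantees that $\Phi_t$, $\nabla_y\Phi_t$, and $F_t$ all depend $C^2$-smoothly on $y$ and continuously on $\tau$, uniformly for $y\in\cB$.

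The key uniform estimates on $\cB$ I would establish are $\Phi_t(y)=y+O(\tau)$, $\nabla_y\Phi_t(y)=\id+O(\tau)$, and $F_t(y)=\id+O(\tau)$, together with the sharper comparison $\nabla_y\Phi_t(y)-F_t(y)=O(\tau^2)$. The first three follow by integrating Hamilton's equations \eqref{eq:hamEqs1} and the variational system \eqref{eq:varEqs} backward from their terminal data $(y,0)$ and $(\id,0)$, combined with the closed form \eqref{eq:linSol} and a standard Gronwall estimate on the compact set $\cB$. The refined $O(\tau^2)$ bound is the crucial input: at $s=T$ the true trajectory passes exactly through $(y,0)$, so the Hessian coefficients $\nabla^2_{\ast\ast}H(x(s),p(s))$ of the genuine variational system and the frozen $\nabla^2_{\ast\ast}H(y,0)$ used in \eqref{eq:linSol} differ by $O(T-s)$ uniformly on $[t,T]$. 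Writing $\Pi_1,\Pi_2$ for the two matrix ODE solutions with common terminal data, the difference $\Pi_1-\Pi_2$ satisfies an inhomogeneous linear matrix ODE with an inhomogeneity of sup norm $O(\tau)$; one more Gronwall application integrates this to $\Pi_1-\Pi_2=O(\tau^2)$.

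Condition \eqref{eq:cond1} then falls out immediately:
$F_t(y)^{-1}\nabla h(y)=F_t(y)^{-1}\nabla_y\Phi_t(y)=\id+F_t(y)^{-1}\bigl(\nabla_y\Phi_t(y)-F_t(y)\bigr)=\id+O(\tau^2),$
which is bounded by $\delta/2$ for $\tau$ small enough. For condition \eqref{eq:cond2}, the expansion $\Phi_t(y)=y+O(\tau)$ gives $\|h(y)\|=\|(y-x)+O(\tau)\|=O(r+\tau)$ on $\cB$, while differentiating $F_t(y)=\id+\tau A(y)+O(\tau^2)$ in $y$ and using $\nabla(F_t^{-1})=-F_t^{-1}(\nabla F_t)F_t^{-1}$ yields $\nabla(F_t^{-1})=O(\tau)$. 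Consequently $\|\nabla(F_t^{-1})h(y)\|=O(\tau(r+\tau))\leq\delta/2$ once $r$ and $\tau$ are chosen small enough. The unique zero $y^*\in\cB$ required by Proposition \ref{thm:newtMeth} is produced by Banach's fixed point theorem: $\Phi_t$ is a near-identity $C^1$ diffeomorphism on $\cB$ for small $\tau$, so $\Phi_t(y^*)=x$ has a unique solution $y^*=x+O(\tau)$, which lies in $\cB$ whenever $\tau$ is small compared to $r$.

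The main obstacle is the refined comparison $\nabla_y\Phi_t-F_t=O(\tau^2)$; a naive $O(\tau)$ bound on this difference would not deliver \eqref{eq:cond1} for an arbitrary $\delta<1$, whereas the quadratic improvement makes the condition automatic once $\tau$ is small. The argument hinges on the exact agreement of the true and frozen Hessian coefficients at $s=T$, a direct consequence of the terminal data $x(T)=y$, $p(T)=0$; once this observation is in hand, the estimate reduces to two standard Gronwall computations over the compact set $\cB$, and the remaining bookkeeping is routine.
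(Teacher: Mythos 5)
Your proof follows the same route as the paper: both set up the difference $D_t(y)=\bigl(\nabla_y\Phi_t,\nabla_y\Psi_t\bigr)^{\tT}-\bigl(F_t,G_t\bigr)^{\tT}$, show it satisfies a linear inhomogeneous ODE driven by $(N_s-M)$ which is controlled by $\nabla^3 H$, integrate by Gronwall/Duhamel to bound $\|\nabla_y\Phi_t-F_t\|$, and handle \eqref{eq:cond2} by shrinking $\cB$ so that $\|h(y)\|$ is small while $\|\nabla(F_t^{-1})\|$ stays bounded. You add two things the paper omits: an explicit Banach fixed-point argument that $\Phi_t(y)=x$ has a unique root $y^*\in\cB$ (the paper silently assumes this hypothesis of Proposition \ref{thm:newtMeth}), and the sharpened estimate $\nabla_y\Phi_t-F_t=O(\tau^2)$ from the observation $N_T(y)=M(y)$, whereas the paper settles for $\|D_t\|\leq \mathrm{const}\cdot T$.

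The one thing you get wrong is the claim that the $O(\tau^2)$ refinement is ``crucial'' and that ``a naive $O(\tau)$ bound would not deliver \eqref{eq:cond1} for an arbitrary $\delta<1$.'' Proposition \ref{thm:newtMeth} only asks for the \emph{existence} of some $\delta\in(0,1)$, and even for a fixed target $\delta$ the $O(\tau)$ bound already suffices: $\|F_t(y)^{-1}\|$ stays uniformly bounded as $\tau\to 0$ (it tends to $\id$), so $\|\id-F_t(y)^{-1}\nabla_y\Phi_t(y)\|\leq\|F_t^{-1}\|\cdot\|D_t\|=O(\tau)$ can be driven below $\delta/2$ simply by shrinking $T$. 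Your quadratic estimate is correct and harmless, but the motivation you give for needing it is a misconception; the paper's cruder linear bound is all the argument requires.
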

\begin{proof}
By the general theory of ordinary differential equations (see e.g. \cite{CL55}), we note first that under our assumptions, there is a $T> 0$ such that $\Phi_t\ofy$ and $\Psi_t\ofy$ are unique solutions of the terminal value problem \eqref{eq:charEqHJ} - \eqref{eq:termValCharEqHJ}, and they have continuous derivatives with respect to $y$. Furthermore, it is clear from \eqref{eq:linSol} that $F_t\ofy$ is nonzero for all $y$, and
\begin{equation}\label{eq:bound1}
\max_{y\in\cB}\|F_t\ofy^{-1}\|<\infty,
\end{equation}
\begin{equation}\label{eq:bound2}
\max_{y\in\cB}\|\nabla\big(F_t\ofy^{-1}\big)\|<\infty.
\end{equation}

Now, in order to prove \eqref{eq:cond1}, it is sufficient to prove that given an $\eta>0$, there is a $T>0$ such that
\begin{equation}\label{eq:etaIneq}
\|F\ofy-\nabla h\ofy\|\leq\eta,
\end{equation}
for all $t\leq T$. Indeed,
\begin{equation*}
\begin{split}
\|\id-F\ofy^{-1}\nabla h\ofy\|&=\|F\ofy^{-1}(F\ofy-\nabla h\ofy)\|\\
&\leq\max_{y\in\cB}\|F\ofy^{-1}\|\|F\ofy-\nabla h\ofy\|\\
&\leq\mathrm{const}\times\eta.
\end{split}
\end{equation*}
In order to prove \eqref{eq:etaIneq}, we set
\begin{equation*}
D_t(y)=
\begin{pmatrix}
\nabla_y\Phi_t\ofy\\
\nabla_y\Psi_t\ofy
\end{pmatrix}
-
\begin{pmatrix}
F_t\ofy\\
G_t\ofy
\end{pmatrix},
\end{equation*}
and
\begin{equation*}
N_t(y)=
\begin{pmatrix}
\nabla^2_{xp}H(\Phi_t\ofy,\Psi_t\ofy)&\nabla^2_{xp}H(\Phi_t\ofy,\Psi_t\ofy)\\
\nabla^2_{xp}H(\Phi_t\ofy,\Psi_t\ofy)&\nabla^2_{xp}H(\Phi_t\ofy,\Psi_t\ofy)
\end{pmatrix}.
\end{equation*}
Then $D_t\ofy$ satisfies the following system of differential equations:
\begin{equation}
\begin{split}
\dot{D}_t\ofy&=N_t\ofy
\begin{pmatrix}
\nabla_y\Phi_t\ofy\\
\nabla_y\Psi_t\ofy
\end{pmatrix}
-M\ofy
\begin{pmatrix}
F_t\ofy\\
G_t\ofy
\end{pmatrix}\\
&=M\ofy D_t\ofy+E_t\ofy,
\end{split}
\end{equation}
where
\begin{equation*}
E_t\ofy=(N_t\ofy-M\ofy)
\begin{pmatrix}
\nabla_y\Phi_t\ofy\\
\nabla_y\Psi_t\ofy
\end{pmatrix},
\end{equation*}
subject to the terminal condition $D_T\ofy=\id$. Hence
\begin{equation}
D_t\ofy=\int_T^t e^{(t-s)M\ofy}E_s\ofy ds.
\end{equation}
As a consequence
\begin{equation*}
\begin{split}
\|D_t\ofy\|&\leq\int_T^t\|e^{(t-s)M\ofy}E_s\ofy\|\,ds\\
&\leq\mathrm{const}\max_{t\leq s\leq T}\|E_s\ofy\|\,T\\
&\leq\mathrm{const}\max_{(x,p)\in\cB\times\cP}\|\nabla^3 H(x,p)\|\,T,
\end{split}
\end{equation*}
where the constant is independent of $T$. The set $\cP$ is a bounded subset of $\bR^n$ which contains the trajectory of $p\oft$, for $0\leq t\leq T$. Since the maximum above is finite, we conclude that $\|D_t\ofy\|\leq\mathrm{const}\,T$.

Condition \eqref{eq:cond2} is a consequence of \eqref{eq:bound2} and the fact that we can choose $\cB$ sufficiently small so that $\|h\ofy\|$ is less than any given number.
\end{proof}

\end{document}